\DeclareMathAlphabet\mathbfcal{OMS}{cmsy}{b}{n}
\newcommand{\ignore}[1]{}
\renewcommand{\Pr}{{\bf Pr}}
\newcommand{\E}{{\bf E}}
\newcommand{\bz}{\boldsymbol{z}}
\newcommand{\bd}{{\boldsymbol{d}}}
\newcommand{\bxi}{{\boldsymbol{\xi}}}
\newcommand{\bI}{{\boldsymbol{I}}}
\begin{document}
\title{A Tight Lower Bound of $\Omega(\log n)$ for the Estimation of the Number of Defective Items}
\author{
    {\bfseries Nader H. Bshouty} \\
    Dept. of Computer Science \\
    Technion\\ Haifa, Israel \\
\texttt{bshouty@cs.technion.ac.il}
\\ \ \\
    {\bfseries Gergely Harcos} \\
        Number Theory Divison\\
Alfréd Rényi Institute of Mathematics\\
    Budapest, Hungary \\
\texttt{harcos.gergely@renyi.hu}
}
\institute{}
\maketitle

\begin{abstract}
Let $X$ be a set of items of size $n$ , which may contain some defective items denoted by $I$, where $I \subseteq X$. In group testing, a {\it test} refers to a subset of items $Q \subset X$. The test outcome is $1$ (positive) if $Q$ contains at least one defective item, i.e., $Q\cap I \neq \emptyset$, and $0$ (negative) otherwise.

We give a novel approach to obtaining tight lower bounds in non-adaptive randomized group testing. Employing this new method, we can prove the following result.

Any non-adaptive randomized algorithm that, for any set of defective items $I$, with probability at least $2/3$, returns an estimate of the number of defective items $|I|$ to within a constant factor requires at least 
$\Omega({\log n})$ tests. 

Our result matches the upper bound of $O(\log n)$ and solves the open problem posed by Damaschke and Sheikh Muhammad in~\cite{DamaschkeM10,DamaschkeM10b} and by Bshouty in~\cite{Bshouty23}.
\end{abstract}

\section{Introduction}

Let $X$ be a set of $n$ items, among which are defective items denoted by $I \subseteq X$. In the context of group testing, a {\it test} is a subset $Q\subseteq X$ of items, and its result is $1$ if $Q$ contains at least one defective item (i.e., $Q\cap I\not=\emptyset$), and $0$ otherwise.

Although initially devised as a cost-effective way to conduct mass blood testing \cite{D43}, group testing has since been shown to have a broad range of applications. These include DNA library screening \cite{ND00}, quality control in product testing \cite{SG59}, file searching in storage systems \cite{KS64}, sequential screening of experimental variables \cite{L62}, efficient contention resolution algorithms for multiple-access communication \cite{KS64,W85}, data compression \cite{HL02}, and computation in the data stream model \cite{CM05}. Additional information about the history and diverse uses of group testing can be found in \cite{Ci13,DH00,DH06,H72,MP04,ND00} and their respective references.

{\it Adaptive} algorithms in group testing employ tests that rely on the outcomes of previous tests, whereas {\it non-adaptive} algorithms use tests independent of the outcome of previous tests\footnote{A test may depend on previous tests but not on the outcomes of the previous tests.}, allowing all tests to be conducted simultaneously in a single step. Non-adaptive algorithms are often preferred in various group testing applications~\cite{DH00,DH06}.

Estimating the number of defective items $d:=|I|$ to within a constant factor of $\alpha$ is the problem of identifying an integer $D$ that satisfies $d\le D< \alpha d$. This problem is widely utilized in a variety of applications~\cite{ChenS90,Swallow85,Thompson62,WalterHB80,GatwirthH89}. 

Estimating the number of defective items in a set $X$ has been extensively studied, with previous works including~\cite{BB18,ChengX14,DamaschkeM10,DamaschkeM10b,FalahatgarJOPS16,RonT16}. 
In this paper, we focus specifically on studying this problem in the non-adaptive setting. Bshouty~\cite{Bshouty19} showed that deterministic algorithms require at least $\Omega(n)$ tests to solve this problem. For randomized algorithms,
Damaschke and Sheikh Muhammad ~\cite{DamaschkeM10b} presented a non-adaptive randomized algorithm that makes $O(\log n)$ tests and, with high probability,
returns an integer $D$ such that $D\ge d$ and $\E[D]=O(d)$. Bshouty \cite{Bshouty19} proposed a polynomial time randomized algorithm that makes $O(\log n)$ tests and, with probability at least $2/3$, returns an estimate of the number of defective items within a constant factor. 

As for lower bounds, Damaschke and Sheikh Muhammad \cite{DamaschkeM10b} gave the lower bound of $\Omega(\log n)$; however, this result holds only for algorithms that select each item in each test uniformly and independently with some fixed probability. They conjectured that any randomized algorithm with a constant failure probability also requires $\Omega(\log n)$ tests. Ron and Tsur~\cite{RonT16}\footnote{The lower bound in~\cite{RonT16} pertains to a different model of non-adaptive algorithms, but their technique implies this lower bound.} and independently 
Bshouty \cite{Bshouty19} prove this conjecture up to a factor of $\log\log n$. 
Recently in \cite{Bshouty23}, Bshouty established a lower bound of  $$\Omega\left(\frac{\log n}{(c\log^* n)^{(\log^*n)+1}}\right)$$ tests, where $c$ is a constant and $\log^*n$ is the smallest integer $k$ such that $\log\log \stackrel{k}{\ldots}\log n<2$.
It follows that the lower bound is
$$\Omega\left(\frac{\log n}{\log\log \stackrel{k}{\ldots}\log n}\right)$$
for any constant $k$.

In this paper, we close the gap between the lower and upper bound. We prove
\begin{theorem}\label{TH1} 
    Let $\alpha=1+\Omega(1)$. Any non-adaptive randomized algorithm that, with probability at least $2/3$, $\alpha$-estimates the number of defective items must make at least
    $$\Omega\left(\frac{\log n}{\log\alpha}\right)$$
    tests. 

    In particular, for algorithms that estimate the number of defective items to within a constant factor, the bound is $\Omega(\log n)$.
\end{theorem}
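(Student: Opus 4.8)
The plan is to argue by Yao's minimax principle: it suffices to exhibit one distribution $\mu$ over defective sets on which every \emph{deterministic} non-adaptive tester that uses too few tests is correct with probability less than $2/3$. Put $s:=\alpha$, let $L$ be the largest integer admitting integers $1\le d_1<\dots<d_L\le n$ with $d_{j+1}\ge \alpha\, d_j$, so that $L-1=\Theta(\log n/\log\alpha)$, and let $\mu$ draw $j\in\{1,\dots,L\}$ uniformly and then plant a uniformly random defective set of size $d_j$. Since $d_{j+1}\ge\alpha d_j$, the admissible output intervals $A_j:=[d_j,\alpha d_j)$ are pairwise disjoint, so a correct $\alpha$-estimate must fall in a \emph{different} interval for each scale. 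As the randomized algorithm is correct with probability $\ge2/3$ on every fixed input, it is correct with probability $\ge2/3$ under $\mu$, and Yao then hands me a deterministic tester — a fixed test family $Q_1,\dots,Q_m$ together with a fixed decision map — that is correct with probability $\ge2/3$ under $\mu$.

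The crux, and the place where arbitrary overlaps among the $Q_i$ must be tamed, is a \emph{monotone coupling} of adjacent scales. Writing $P_j$ for the law of the outcome vector $Y=(Y_1,\dots,Y_m)$ on a random size-$d_j$ set and $p_i(d):=\Pr[Q_i\cap I\ne\emptyset]$ for a random size-$d$ set $I$, I would sample $I_{j+1}$ uniformly of size $d_{j+1}$ and let $I_j$ be a uniformly random size-$d_j$ subset of it. Then $I_j\subseteq I_{j+1}$, both marginals are correct, and coordinatewise $Y_i(I_j)\le Y_i(I_{j+1})$, so the two vectors can differ only through coordinates flipping $0\to1$:
\begin{equation*}
 d_{\mathrm{TV}}(P_j,P_{j+1})\ \le\ \Pr\bigl[Y(I_j)\ne Y(I_{j+1})\bigr]\ \le\ \sum_{i=1}^m\bigl(p_i(d_{j+1})-p_i(d_j)\bigr).
\end{equation*}
Because each $p_i$ depends only on $|Q_i|$ and is monotone in $d$, summing along the whole chain telescopes:
\begin{equation*}
 \sum_{j=1}^{L-1} d_{\mathrm{TV}}(P_j,P_{j+1})\ \le\ \sum_{i=1}^m\bigl(p_i(d_L)-p_i(d_1)\bigr)\ \le\ m.
\end{equation*}
This is exactly what makes the argument robust to adversarial, overlapping tests: no matter how the $Q_i$ are arranged, their \emph{total} distinguishing power across all scales is at most $m$.

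To finish I would lower-bound the per-gap distances from the success requirement. Let $O_j$ be the law of the estimate on scale $j$; since the estimate is a function of the outcome vector, the data-processing inequality gives $d_{\mathrm{TV}}(O_j,O_{j+1})\le d_{\mathrm{TV}}(P_j,P_{j+1})$. With $f_j:=O_j(A_j)$ and the $A_j$ disjoint (so $A_j\subseteq X\setminus A_{j+1}$),
\begin{equation*}
 d_{\mathrm{TV}}(O_j,O_{j+1})\ \ge\ O_j(A_j)-O_{j+1}(A_j)\ \ge\ f_j+f_{j+1}-1 .
\end{equation*}
Summing over the $L-1$ gaps yields $\sum_j d_{\mathrm{TV}}(O_j,O_{j+1})\ge 2\sum_j f_j-(L+1)$, and the average success $\frac1L\sum_j f_j\ge\frac23$ forces this to be at least $L/3-1$. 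Combining with the telescoping bound gives $m\ge L/3-1=\Omega(\log n/\log\alpha)$, and taking $\alpha$ constant yields the promised $\Omega(\log n)$. I expect the main obstacle to be precisely the coupling step: for worst-case tests the coordinates $Y_i$ are highly dependent, so treating the tests as independent samplers (as in the restricted lower bound of Damaschke and Sheikh Muhammad) is not available; the coordinatewise monotone coupling is what keeps the joint total-variation distance controlled by a telescoping sum of marginals, and passing to \emph{average} rather than per-scale success is what lets Yao's principle be applied cleanly.
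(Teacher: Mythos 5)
Your proof is correct, and it takes a genuinely different route from the paper. The paper first reduces $\alpha$-estimation to a two-point distinguishing problem (telling $I_0\subset I_1$ with $|I_1|=\alpha|I_0|$ apart), applies Yao, and then proves a structural ``polarization'' statement: for a scale $z=n/d$ drawn uniformly in the logarithm, with probability $99/100$ the fixed test family splits so that $\sum_{q_i\le z}q_i\le z/(100\alpha)$ and $\sum_{q_i\ge z}1/q_i\le 1/(100\alpha z)$ (Lemma~\ref{Main}), whence all small tests answer $0$ and all large tests answer $1$ for \emph{both} sets and the pair is indistinguishable. The engine behind that lemma is the telescoping product identity of Lemma~\ref{main}, $\prod_\ell\max(1,1/(\sigma_\ell\tau_\ell))>n/4^s$. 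You instead run a hybrid argument over the whole chain of $L$ scales simultaneously: the monotone coupling $I_j\subseteq I_{j+1}$ bounds each adjacent total-variation distance by $\sum_i(p_i(d_{j+1})-p_i(d_j))$, these increments telescope to at most $m$ over the chain, while correctness on the pairwise disjoint answer intervals $A_j$ forces the summed TV distances to be at least $L/3-1$; this yields $m=\Omega(L)$ with no polarization lemma at all. The two arguments exploit the same underlying monotonicity ($I\subseteq I'$ implies coordinatewise dominance of outcome vectors), but your telescoping-of-marginals step replaces the paper's product inequality and is more elementary and shorter; what the paper's version buys in exchange is a stronger structural conclusion --- at a \emph{typical single scale} the two outcome vectors coincide with probability $97/100$ --- which is what its two-point/distinguisher formulation requires. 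The only cosmetic issues in your write-up are the stray ``Put $s:=\alpha$'' and the unneeded remark that $p_i$ depends only on $|Q_i|$ (monotonicity of $p_i$ in $d$, which follows from the same coupling, is all the telescoping needs); you should also say a word about choosing integer $d_j$ with $d_{j+1}\ge\alpha d_j$ and $L=\Theta(\log n/\log\alpha)$, e.g.\ $d_j=\lceil\alpha^{j-1}\rceil$ after reducing to $\alpha\ge 2$ as the paper does.
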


To prove the Theorem, we first consider any algorithm that makes $m=\log n/(c\log \alpha)$ tests, for a sufficiently large constant $c$, and $\alpha$-estimates the number of defective items. Next, we use this algorithm to construct another one that makes $2m$ tests and, when given any pair of sets of defective items where one set is $\alpha$ times the size of the other set, with high probability, can distinguish which set is the larger of the two.
We then use Yao's principle to turn the algorithm to a deterministic algorithm that can do the same for a random pair of such sets. 
The input pairs are generated with a distribution that is uniform over the logarithm of the size $d$ of the smaller set and uniformly distributed over pairs of subsets of $X$ of sizes $d$ and $\alpha d$.

We then employ a central lemma (Lemma~\ref{Main}) in this paper's analysis. This lemma plays a pivotal role in our proof, requiring an innovative approach for its proof. This Lemma implies that if the number of tests is $2m$ then for an input drawn according to the above distribution, with high probability, the test outcomes for both sets are identical, making them indistinguishable. This leads to a contradiction and, as a result, establishes the lower bound of $m = \Omega(\log n / \log \alpha)$.

The paper is organized as follows: The next section introduces some definitions and notations. In Section~\ref{Section3}, we present the main lemma that plays a crucial role in the proof of Theorem~\ref{TH1}. Then in Section~\ref{Section4} we prove Theorem~\ref{TH1}.

\section{Definitions and Notation}\label{section2}
In this section, we introduce some definitions and notation.

We will consider the set of {\it items} $X=[n]=\{1,2,\ldots,n\}$ and the set of {\it defective items} $I\subseteq X$. 
The algorithm is provided with knowledge of $n$ and has access to a test oracle, denoted as ${\cal O}_I$.
The algorithm uses the oracle ${\cal O}_I$ to make a {\it test} $Q\subseteq X$, and the oracle responds with ${\cal O}_I(Q):=1$ if $Q\cap I\not=\emptyset$, and ${\cal O}_I(Q):=0$ otherwise. 

We say that an algorithm ${\cal A}$ $\alpha$-{\it estimates} the number of defective items with probability at least $1-\delta$ if, for every $I\subseteq X$, ${\cal A}$ runs in polynomial time in $n$, makes tests with the oracle ${\cal O}_I$, and with probability at least $1-\delta$, returns an integer ${\cal A}(I)$ such that\footnote{Some papers in the literature provide the following alternative definition: $|I|/\alpha\le {\cal A}(I)\le \alpha |I|$. It is worth noting that this alternative definition is equivalent to $\alpha^2$-estimation, and the results in this paper also hold for this definition.} $|I|\le {\cal A}(I)< \alpha |I|$. If $\alpha$ is constant, then we say that the algorithm {\it estimates the number of defective items to within a constant factor}. 

The algorithm is called {\it non-adaptive} if the queries are independent of the answers of previous queries and, therefore, can be executed simultaneously in a single step. Our objective is to develop a non-adaptive algorithm that minimizes the number of tests and provides, with a probability of at least $1-\delta$, an $\alpha$ estimation of the number of defective items.

Throughout this paper, all logarithms are taken to the base 2 unless stated otherwise, and bold letters denote random variables.

In the Appendix, we prove the following lemma:
\begin{lemma}\label{Basic1}
    Let ${\cal A}$ be an algorithm that makes $T$ tests and, with probability at least $2/3$, $\alpha$-estimates the number of defective items. Then there is an algorithm ${\cal A}'$ that makes $O(T\log(1/\delta))$ tests and, with probability at least $1-\delta$, $\alpha$-estimates the number of defective items.
\end{lemma}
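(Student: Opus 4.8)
The plan is to use the standard \emph{median trick} for amplifying the success probability of an approximation algorithm. First I would run the given algorithm $\cal A$ independently $k$ times, where $k=\Theta(\log(1/\delta))$ is an odd integer to be fixed later, obtaining integer outputs $D_1,\dots,D_k$; the new algorithm $\cal A'$ then returns their median $D$ (which is again an integer, since $k$ is odd). Because each invocation of $\cal A$ makes $T$ tests, the total number of tests is $kT=O(T\log(1/\delta))$, and since the $k$ copies can be run in parallel the resulting algorithm remains non-adaptive.

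To analyze correctness, fix a defective set $I$ and write $d=|I|$. Call a single run \emph{good} if its output lies in the target interval $[d,\alpha d)$; by hypothesis each run is good, independently, with probability at least $2/3$. The key observation is that whenever a strict majority of the $k$ runs are good, the median $D$ is itself good. Indeed, if more than $k/2$ of the outputs lie in $[d,\alpha d)$, then fewer than $k/2$ of them are $<d$ and fewer than $k/2$ are $\ge\alpha d$; hence more than $k/2$ outputs are $\ge d$ and more than $k/2$ are $<\alpha d$, which forces the median to satisfy $d\le D<\alpha d$.

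It therefore suffices to bound the probability that fewer than half the runs are good. Let $G$ denote the number of good runs, so that $\E[G]\ge 2k/3$. By Hoeffding's inequality,
\[
\Pr[G\le k/2]\le \Pr\!\bigl[G\le \E[G]-k/6\bigr]\le \exp(-k/18),
\]
and choosing $k=\lceil 18\ln(1/\delta)\rceil$ (rounded up to the next odd integer) makes this at most $\delta$; thus $\cal A'$ returns a value in $[d,\alpha d)$ with probability at least $1-\delta$, as required. I expect the whole argument to be routine, and the only point that demands care is the median step: one must check that membership in the two-sided interval $[d,\alpha d)$ is preserved by taking the median. This works precisely because that membership splits into two one-sided conditions, each of which is guaranteed by a strict majority of good runs—so it is the combination of the two-sidedness of the estimate with the majority argument, rather than any single step, that is the crux of the proof.
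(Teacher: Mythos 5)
Your proposal is correct and is essentially the same as the paper's proof: both run $\cal A$ an odd number $O(\log(1/\delta))$ of times, take the median, and observe that the median fails only if at least half the runs fail, which a Chernoff/Hoeffding bound makes at most $\delta$. Your write-up merely spells out the majority-implies-good-median step in more detail than the paper does.
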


\section{Preliminary Results}\label{Section3}
In this section, we present the main lemma that plays a crucial role in proving Theorem~\ref{TH1}.

First, we prove the following lemma:
\begin{lemma}\label{main}
Let $n$ be an integer. Given $s$ integers $1=q_1\le q_2\le \cdots\le q_{s-1}\le q_s=n$, define $$\sigma_\ell:=\sum_{i=1}^\ell q_i \mbox{\ \  and\  \ } \tau_\ell:=\sum_{i=\ell+1}^s\frac{1}{q_i}.$$
Then, 
$$\prod_{\ell=1}^{s-1} \max\left(1,\frac{1}{\sigma_\ell\tau_\ell}\right)>\frac{n}{4^s}.$$
\end{lemma}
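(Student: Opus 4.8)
The plan is to pass to the reciprocal product. Since $\max\!\big(1,\tfrac1{\sigma_\ell\tau_\ell}\big)$ equals $\tfrac1{\sigma_\ell\tau_\ell}$ exactly when $\sigma_\ell\tau_\ell\le 1$ and equals $1$ otherwise, setting $Z=\{\ell:\sigma_\ell\tau_\ell<1\}$ gives
$$\prod_{\ell=1}^{s-1}\max\Big(1,\frac1{\sigma_\ell\tau_\ell}\Big)=\prod_{\ell\in Z}\frac1{\sigma_\ell\tau_\ell},$$
so the statement is equivalent to $\prod_{\ell\in Z}\sigma_\ell\tau_\ell<4^s/n$. First I would record the telescoping identity, valid for any increasing chain $c_0<\cdots<c_r$ of indices,
$$\prod_{t=0}^{r}\sigma_{c_t}\tau_{c_t}=\sigma_{c_0}\tau_{c_r}\prod_{t=0}^{r-1}\big(\tau_{c_t}\sigma_{c_{t+1}}\big),$$
which is immediate by regrouping the factors.

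The key move is to apply this to the chain $C=\{1\}\cup Z\cup\{s-1\}$, written $1=c_0<\cdots<c_r=s-1$. Each index added to $Z$ (namely $1$ or $s-1$, when not already present) carries a factor $\sigma_\ell\tau_\ell\ge 1$, so $\prod_{\ell\in Z}\sigma_\ell\tau_\ell\le\prod_{\ell\in C}\sigma_\ell\tau_\ell$, while the boundary term collapses to $\sigma_{c_0}\tau_{c_r}=\sigma_1\tau_{s-1}=1\cdot(1/q_s)=1/n$. Hence
$$\prod_{\ell\in Z}\sigma_\ell\tau_\ell\le\frac1n\prod_{t=0}^{r-1}\big(\tau_{c_t}\sigma_{c_{t+1}}\big).$$
Since $\sum_t(c_{t+1}-c_t)=c_r-c_0=s-2$, the whole theorem reduces to the single per-bracket estimate
$$\tau_a\sigma_b\le 4^{\,b-a}\qquad\text{for consecutive chain points }a=c_t<b=c_{t+1};$$
granting it, the product is at most $4^{s-2}/n<4^s/n$, which is even slightly stronger than needed.

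For the bracket estimate I would use what ``consecutive in $C$'' means: every interior index $a<m<b$ lies outside $Z$, so $\sigma_m\tau_m\ge 1$, while the endpoints satisfy $\sigma_a\tau_a\le 1$ and $\sigma_b\tau_b\le 1$ (the at most two extreme brackets meeting $1$ or $s-1$ being treated separately via $\sigma_1=1$ and $\tau_{s-1}=1/n$). Writing $\sigma_b=\sigma_a+M$ and $\tau_a=P+\tau_b$ with $M=\sum_{a<j\le b}q_j$ and $P=\sum_{a<i\le b}1/q_i$, the expansion
$$\tau_a\sigma_b=\underbrace{\sigma_aP}_{\le\,\sigma_a\tau_a\le1}+\underbrace{\tau_b\sigma_a}_{\le\,\sigma_b\tau_b\le1}+\underbrace{\tau_bM}_{\le\,\sigma_b\tau_b\le1}+PM\le PM+3$$
reduces everything to controlling the ``diagonal'' term $PM=\big(\sum_{a<i\le b}1/q_i\big)\big(\sum_{a<j\le b}q_j\big)$.

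The main obstacle is precisely this bound on $PM$. Unconstrained, $PM$ can be astronomically large (one huge $q_j$ inside the window suffices), so the interior conditions $\sigma_m\tau_m\ge 1$ must be used essentially: combined with $\sigma_a\tau_a\le 1$ (which forces $\sigma_a\le q_{a+1}$) and $\sigma_b\tau_b\le 1$ (which forces $\tau_b\le 1/q_b$), they keep the $q_i$ in the window from spreading by more than a bounded factor per step, and I expect this to yield $PM\le 4^{\,b-a}-3$ by induction on $b-a$, the base case being the clean identity $\tau_a\sigma_{a+1}=\sigma_a\tau_a+1+q_{a+1}\tau_{a+1}\le 3$. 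Converting this ``no large jumps within a bracket'' intuition into the exact constant is where the real work lies, and it is what ultimately produces the base $4$ in $4^s$.
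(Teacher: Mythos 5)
Your reduction is sound as far as it goes: passing to $Z=\{\ell:\sigma_\ell\tau_\ell<1\}$, the chain telescoping $\prod_{t}\sigma_{c_t}\tau_{c_t}=\sigma_{c_0}\tau_{c_r}\prod_t(\tau_{c_t}\sigma_{c_{t+1}})$, and the collapse of the boundary term to $1/n$ are all correct. But the argument has a genuine gap exactly where you say the real work lies: the per-bracket estimate $\tau_a\sigma_b\le 4^{\,b-a}$ (equivalently $PM\le 4^{\,b-a}-3$) is never proved for $b-a\ge 2$. This is not a routine detail. The interior conditions $\sigma_m\tau_m\ge 1$ are \emph{lower} bounds on products and do not by themselves cap the spread of the $q_i$ inside a window; one has to combine them carefully with the endpoint conditions $\sigma_a\tau_a\le 1$ and $\sigma_b\tau_b\le 1$ (e.g.\ for $b-a=2$ one can extract $q_{a+2}/q_{a+1}\le 4$ from $(\sigma_a+q_{a+1})(\tau_{a+2}+1/q_{a+2})\ge 1$ together with $\sigma_a\le q_{a+1}$ and $\tau_{a+2}\le 1/q_{a+2}$), and the induction you gesture at is not set up. The treatment of the ``extreme brackets'' touching $\ell=1$ or $\ell=s-1$, where the endpoint inequalities $\sigma_a\tau_a\le1$, $\sigma_b\tau_b\le1$ may fail, is also only asserted. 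So the proposal is a reduction to an unproven claim, not a proof.

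It is worth noting that the bracket decomposition is avoidable. The paper's proof telescopes per index: it writes
$$\prod_{\ell=1}^{s-1}\frac{q_\ell}{q_{\ell+1}}\cdot\frac{\sigma_{\ell+1}}{\sigma_\ell}\cdot\frac{\tau_{\ell-1}}{\tau_\ell}=\sigma_s\tau_0>n,$$
and expands each factor as
$$\frac{q_\ell}{q_{\ell+1}}\Bigl(1+\frac{q_{\ell+1}}{\sigma_\ell}\Bigr)\Bigl(1+\frac{1}{q_\ell\tau_\ell}\Bigr)=\frac{q_\ell}{q_{\ell+1}}+\frac{q_\ell}{\sigma_\ell}+\frac{1}{q_{\ell+1}\tau_\ell}+\frac{1}{\sigma_\ell\tau_\ell}\le 3+\frac{1}{\sigma_\ell\tau_\ell}\le 4\max\Bigl(1,\frac{1}{\sigma_\ell\tau_\ell}\Bigr),$$
since the first three terms are each at most $1$ unconditionally. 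This gives the lemma in a few lines with no case analysis on $Z$ and no control of jumps within brackets; if you want to salvage your write-up, replacing the per-bracket claim with this per-index bound is the cleanest repair.
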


\begin{proof} 
First, we have
$$\prod_{\ell=1}^{s-1} \left(\frac{q_\ell}{q_{\ell+1}}\frac{\sigma_{\ell+1}}{\sigma_\ell}\frac{\tau_{\ell-1}}{\tau_\ell}\right)=\frac{q_1}{q_s}\cdot \frac{\sigma_s}{\sigma_1}\cdot \frac{\tau_{0}}{\tau_{s-1}}=\sigma_s\tau_0>n.$$
On the other hand, the left-hand side satisfies
\begin{eqnarray*}
  \frac{q_\ell}{q_{\ell+1}}\frac{\sigma_{\ell+1}}{\sigma_\ell}\frac{\tau_{\ell-1}}{\tau_\ell}&=&  \frac{q_\ell}{q_{\ell+1}}\left(1+\frac{q_{\ell+1}}{\sigma_\ell}\right)\left(1+\frac{1}{q_\ell\tau_\ell}\right)\\
  &=& \frac{q_\ell}{q_{\ell+1}}+\frac{q_\ell}{\sigma_\ell}+\frac{1}{q_{\ell+1}\tau_\ell}+\frac{1}{\sigma_\ell\tau_\ell}\\
  &\le&3+\frac{1}{\sigma_\ell\tau_\ell}\le 4\max\left(1,\frac{1}{\sigma_\ell\tau_\ell}\right).
\end{eqnarray*}
Hence $$\prod_{\ell=1}^{s-1} 4\max\left(1,\frac{1}{\sigma_\ell\tau_\ell}\right)>n,$$
and the result follows.\qed
\end{proof}
We now prove the main Lemma. 

\begin{lemma}\label{Main}
    Let $\alpha\ge 2$ and $s=(\log n)/(2000\log \alpha)$. Let $1= q_1\le q_2\le \cdots\le q_s= n$. Let $$Z=\{2^{\lfloor\log\alpha\rfloor+1},2^{\lfloor\log\alpha\rfloor+2},\ldots,2^{\lfloor\log (n/\alpha)\rfloor}\}.$$ Then: 
    $$\Pr_{z\in Z}\left[ \sum_{q_i\le z}q_i\le  \frac{z}{100\alpha} \text{ and } \sum_{q_i\ge z} \frac{1}{q_i}\le \frac{1}{100\alpha z}\right]\ge \frac{99}{100},$$
    where $z$ is uniformly drawn from $Z$. 
\end{lemma}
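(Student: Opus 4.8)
The plan is to reduce the statement to Lemma~\ref{main} through a \emph{windowing plus dyadic counting} argument. Fix $z\in Z$ and suppose it lies strictly between two consecutive values, $q_\ell<z<q_{\ell+1}$ for some $1\le\ell\le s-1$ (any $z$ that equals some $q_i$ I simply discard, which is harmless since I am only lower-bounding the set of ``good'' $z$). For such $z$ one has $\sum_{q_i\le z}q_i=\sigma_\ell$ and $\sum_{q_i\ge z}\tfrac1{q_i}=\tau_\ell$, so the two required inequalities hold \emph{simultaneously} precisely when
$$100\alpha\,\sigma_\ell\le z\le\frac{1}{100\alpha\,\tau_\ell}.$$
Thus the good values in the $\ell$-th window are exactly the powers of $2$ in the interval $J_\ell:=[\,100\alpha\,\sigma_\ell,\ \tfrac{1}{100\alpha\,\tau_\ell}\,]$.

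The key observation, and the step I expect to matter most, is that $J_\ell$ \emph{automatically} sits inside the window $(q_\ell,q_{\ell+1})$ and inside the range spanned by $Z$. Indeed $\sigma_\ell\ge q_\ell$ gives $100\alpha\,\sigma_\ell\ge 100\alpha\,q_\ell> q_\ell$, and $\tau_\ell\ge 1/q_{\ell+1}$ gives $\tfrac{1}{100\alpha\,\tau_\ell}\le \tfrac{q_{\ell+1}}{100\alpha}< q_{\ell+1}$, so $J_\ell\subseteq(q_\ell,q_{\ell+1})$ and the $J_\ell$ are pairwise disjoint. Moreover every power of $2$ in some $J_\ell$ lies between $100\alpha$ and $n/(100\alpha)$, hence between the endpoints $2^{\lfloor\log\alpha\rfloor+1}\le 2\alpha$ and $2^{\lfloor\log(n/\alpha)\rfloor}\ge n/(2\alpha)$ of $Z$, and therefore belongs to $Z$ (this is exactly where the constant $100$ buys the needed slack against rounding). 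Consequently
$$\#\{z\in Z:\ z\text{ is good}\}\ \ge\ \sum_{\ell=1}^{s-1}c_\ell,\qquad c_\ell:=\#\{\text{powers of }2\text{ in }J_\ell\}.$$

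Now I invoke the dyadic count: the number of powers of $2$ in $[A,B]$ is at least $\log(B/A)-1$, and here $B/A=\tfrac{1}{(100\alpha)^2\sigma_\ell\tau_\ell}$. Combining this with the trivial bound $c_\ell\ge0$ yields, for \emph{every} $\ell$,
$$c_\ell\ \ge\ \log\max\!\left(1,\frac{1}{\sigma_\ell\tau_\ell}\right)-2\log(100\alpha)-1,$$
where one checks the two cases $\sigma_\ell\tau_\ell<1$ and $\sigma_\ell\tau_\ell\ge1$ separately. Summing over $\ell$ and taking the base-$2$ logarithm of Lemma~\ref{main} (whose conclusion is $\sum_{\ell=1}^{s-1}\log\max(1,\tfrac{1}{\sigma_\ell\tau_\ell})>\log n-2s$) gives
$$\#\{z\in Z:\ z\text{ is good}\}\ >\ \log n-2s-(s-1)\bigl(2\log(100\alpha)+1\bigr).$$

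It remains to turn this into a fraction of $|Z|$. Using $\log n=2000\,s\log\alpha$, the elementary bound $\log(100\alpha)\le 8\log\alpha$ for $\alpha\ge2$, and $(s-1)(2\log(100\alpha)+1)\le 16\,s\log\alpha$ together with $2s\le 2s\log\alpha$, the right-hand side exceeds $(2000-19)s\log\alpha=1981\,s\log\alpha$. Since $|Z|\le\log n=2000\,s\log\alpha$, the good fraction exceeds $1981/2000>99/100$, which is the claimed bound. The whole argument is clean once the containment $J_\ell\subseteq(q_\ell,q_{\ell+1})$ and the membership $J_\ell\cap 2^{\mathbb Z}\subseteq Z$ are in place; the constant $2000$ in the definition of $s$ is precisely what makes the final constant-chasing leave room for $99/100$, so I expect that bookkeeping to be routine rather than the crux.
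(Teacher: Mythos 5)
Your proposal is correct and follows essentially the same route as the paper's proof: the same intervals $[100\alpha\sigma_\ell,\,1/(100\alpha\tau_\ell)]$ nested in $(q_\ell,q_{\ell+1})$, the same dyadic count of powers of $2$, the same reduction to Lemma~\ref{main}, and the same final constant-chasing against the $2000$ in the definition of $s$. The only blemish is the claim $(s-1)(2\log(100\alpha)+1)\le 16\,s\log\alpha$, which fails by a hair at $\alpha=2$ (the correct constant is about $16.3$), but since $2+16.3<19$ the final bound $1981/2000>99/100$ survives unchanged.
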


\begin{proof} 
Let $\sigma_\ell$ and $\tau_\ell$ be as defined in Lemma~\ref{main}.
For each $\ell\in [s-1]$, consider the interval\footnote{If $a>b$ then $[a,b]=\emptyset$.} $I_\ell:=[100\alpha\sigma_\ell,1/(100\alpha\tau_\ell)]$. If $z\in I_\ell$, it satisfies $\sigma_\ell\le z/(100\alpha)$ and $\tau_\ell\le 1/(100\alpha z)$. Additionally, we have $z\ge 100\alpha\sigma_\ell>q_\ell$ and $z\le 1/(100\alpha \tau_\ell)<q_{\ell+1}$. 
Therefore,
$$\sum_{q_i\le z}q_i=\sigma_\ell\le  \frac{z}{100\alpha} \text{ and } \sum_{q_i\ge z} \frac{1}{q_i}=\tau_\ell\le \frac{1}{100 \alpha z}.$$
Furthermore,  $I_\ell\subset (q_\ell,q_{\ell+1}):=\{q|q_\ell<q<q_{\ell+1}\}$. 
As a result, these sets $I_\ell$ are disjoint sets and
therefore
\begin{eqnarray}\label{equiPr}
    \Pr_{z\in Z}\left[ \sum_{q_i\le z}q_i\le  \frac{z}{100\alpha} \text{ and } \sum_{q_i\ge z} \frac{1}{q_i}\le \frac{1}{100\alpha z}\right]\ge \frac{\sum_{\ell=1}^{s-1}|Z\cap I_\ell|}{|Z|}.
\end{eqnarray}

Let $Z'$ be the set of all the powers of 2. We will now show that all the powers of $2$ that are in $I_\ell$ are also in $Z$. That is, $|Z\cap I_\ell|=|Z'\cap I_\ell|$. This follows from two facts. First, the largest powers of $2$ that are in $I:=\cup_\ell I_\ell$ are in $I_{s-1}=[100\alpha\sigma_{s-1},n/(100\alpha)]$, and $\max_{z\in Z} z=2^{\lfloor \log(n/\alpha)\rfloor}>n/(100\alpha)$. Second, the smallest power of $2$ that are in $I$ are in $I_1=[100\alpha,1/(100\alpha\tau_\ell)]$, and $\min_{z\in Z}z=2^{\lfloor\log\alpha\rfloor+1}<100\alpha$.

Using Lemma~\ref{NumT} from the Appendix, the number of powers of $2$ that are in the interval $I_\ell$ is $$|Z'\cap I_\ell|\ge \left\lfloor \log  \max\left(1,\frac{1}{10000\alpha^2\sigma_\ell\tau_\ell}\right)\right\rfloor.$$
    
Therefore, by Lemma~\ref{main}, 
\begin{eqnarray}
\sum_{\ell=1}^{s-1}|Z\cap I_\ell|&=&
\sum_{\ell=1}^{s-1}|Z'\cap I_\ell|\nonumber\\
&\ge&\sum_{\ell=1}^{s-1} \left\lfloor \log  \max\left(1,\frac{1}{10^4\alpha^2\sigma_\ell\tau_\ell}\right)\right\rfloor\nonumber\\ &\ge&  \left(\sum_{\ell=1}^{s-1}   \log  \max\left(1,\frac{1}{10^4\alpha^2\sigma_\ell\tau_\ell}\right)\right)-s\nonumber\\
&=& \log \left(\prod_{\ell=1}^{s-1} \max\left(1,\frac{1}{10^4\alpha^2\sigma_\ell\tau_\ell}\right)\right)-s \nonumber\\&\ge& \log \left(\frac{1}{(10^{4}\alpha^2)^s}\prod_{\ell=1}^{s-1} \max\left(1,\frac{1}{\sigma_\ell\tau_\ell}\right)\right)-s\nonumber\\
&\ge&  \log \left(\prod_{\ell=1}^{s-1} \max\left(1,\frac{1}{\sigma_\ell\tau_\ell}\right)\right)-(15+2\log\alpha)s\nonumber\\&\ge& (\log n-2s)-(15+2\log \alpha)s\nonumber\\
&\ge& \log n-(17+2\log \alpha)\frac{\log n}{2000\log\alpha}\nonumber\\
&\ge& \log n-\frac{19}{2000}\log n\nonumber\\
&\ge& \frac{99}{100} \log n\ge \frac{99}{100} |Z|.\label{kkk}
\end{eqnarray}
By (\ref{equiPr}) and (\ref{kkk}) the result follows.\qed
\end{proof}

\section{The Lower Bound}\label{Section4}

In this section, we present the proof of the theorem that establishes the lower bound on the number of tests required for any non-adaptive randomized algorithm to $\alpha$-estimate the number of defective items, where $\alpha=1+\Omega(1)$.

We prove.

\noindent
{\bf Theorem }\ref{TH1}. 
    {\em Let $\alpha=1+\Omega(1)$. Any non-adaptive randomized algorithm that, with probability at least $2/3$, $\alpha$-estimates the number of defective items must make at least
    $$\Omega\left(\frac{\log n}{\log\alpha}\right)$$
    tests. 

    In particular, for algorithms that estimate the number of defective items to within a constant factor, the bound is $\Omega(\log n)$.}

\begin{proof} 
First, it suffices to prove the lower bound for $\alpha\ge 2$, as any $\alpha$-estimation where $2>\alpha=1+\Omega(1)$ also qualifies as a $2$-estimation, and the lower bound for $2$-estimation is $\Omega(\log n)$, which equates to $\Omega(\log n/\log \alpha)$ when $\alpha=1+\Omega(1)$.

Second, without loss of generality, we assume that $n$ and $\alpha$ are both powers of two. This is because the lower bound for $n'=2^{\lfloor \log n\rfloor}$ and $\alpha'=2^{\lceil \log \alpha \rceil}$ is also a lower bound for $n$ and $\alpha$, and $\Omega(\log n'/\log\alpha')=\Omega(\log n/\log \alpha)$.

Furthermore, we will prove the lower bound for algorithms with a success probability of at least $7/8$. To get a success probability of at least $7/8$, just run the algorithm that has a success probability of at least $2/3$ three times and take the median of the outcomes. See the proof of Lemma~\ref{Basic1}. Therefore, both have the same asymptotic lower bound.

Suppose, to the contrary, that a non-adaptive randomized algorithm ${\cal A}$ exists, which makes
$$s:=\frac{\log n}{2000\log \alpha}$$ tests and, with probability at least $7/8$, $\alpha$-estimates the number of defective items. In other words, for any set of defective items $I\subseteq [n]$, the algorithm ${\cal A}$ makes $s$ random tests (using the oracle ${\cal O}_I$) and, with probability at least $7/8$, returns ${\cal A}(I)$ satisfying $|I|\le {\cal A}(I)< \alpha|I|$. 

Now, we construct an algorithm ${\cal B}$ that, when given two sets of defective items $\{I_0, I_1\}$ where, for some $\xi\in\{0,1\}$, $I_\xi\supset I_{1-\xi}$ and $|I_\xi|= \alpha|I_{1-\xi}|$, makes $2s$ tests (using the oracles ${\cal O}_{I_0}$ and ${\cal O}_{I_1}$), and, with probability at least $3/4$, can determine which of the two sets is larger, effectively outputting $\xi$.

Algorithm ${\cal B}$ first runs algorithm ${\cal A}$ to generate all the tests. This is feasible since algorithm ${\cal A}$ is non-adaptive. Then it makes these tests to both $I_0$ and $I_1$ using ${\cal O}_{I_0}$ and ${\cal O}_{I_1}$, respectively. If ${\cal A}(I_0)>{\cal A}(I_1)$, the algorithm outputs $0$; otherwise, it outputs $1$. 
The probability that neither of the following events occurs: $|I_0|\le {\cal A}(I_0)<\alpha|I_0|$ or $|I_1|\le {\cal A}(I_1)<\alpha|I_1|$, is at most $1/4$. Thus, with probability of at least $3/4$, ${\cal A}(I_\xi)\ge |I_\xi|=\alpha|I_{1-\xi}|>{\cal A}(I_{1-\xi})$, and ${\cal B}$  provides the correct answer.   

We will now define a distribution $D$ over pairs of sets of defective items. Let $D_1$ be the uniform distribution over $N: =\{2^{\log\alpha},2^{\log\alpha+1},\ldots,2^{\log (n/\alpha)-1}\}$. Initially, we select $\bd\in N$ according to the distribution $D_1$. Next, we randomly and uniformly select $\bxi$ from $\{0,1\}$. Finally, we, uniformly at random, draw $\bI_\bxi\subseteq [n]$ of size $\bd$ and $\bI_{1-\bxi}\subseteq [n]$ such that $\bI_{1-\bxi}\supseteq \bI_\bxi$ of size $\alpha \bd$. 

By applying Yao's Principle, we can conclude the existence of a deterministic, non-adaptive algorithm ${\cal C}$ that makes $s$ tests and, when given $\{\bI_0,\bI_1\}$ drawn according to the distribution $D$, with probability of at least $3/4$,  correctly identifies the largest set.

Let $Q_1,Q_2,\ldots,Q_s\subseteq [n]$ be the tests that ${\cal C}$ makes. Note that ${\cal C}$ is deterministic, so $Q_1,Q_2,\ldots,Q_s$ are fixed and non-random. Let $q_i=|Q_i|$ for all $i\in [s]$. We can assume, without loss of generality, that $1= q_1\le q_2\le \cdots\le q_{s-1}\le q_s= n$. In case where $q_1\not=1$ or $q_n\not=n$, then just add the two tests\footnote{The lower bound will then be $s-2$.} $Q_0=\{1\}$ and $Q_{s+1}=[n]$. 

If $\bd\in N$ is drawn according to distribution $D_1$, then $\bz=n/\bd$ is uniformly drawn from $\{2^{\log \alpha+1},2^{\log \alpha+2},\ldots,2^{\log (n/\alpha)}\}$.
By Lemma~\ref{Main}, with probability at least $99/100$, the chosen $\bz=z$ ($\bd=d$) satisfies
\begin{eqnarray}\label{cond1}
    \sum_{q_i\le z} q_i\le \frac{z}{100\alpha }\mbox{\ \ \ and\ \ \ } \sum_{q_i\ge z}\frac{1}{q_i}\le \frac{1}{100\alpha z}.
\end{eqnarray}

Consider $\{\bI_0,\bI_1\}$ drawn according to distribution $D$ conditioned on $\bd=d$ satisfying (\ref{cond1}). Without loss of generality, assume that $|\bI_1|=\alpha d>d=|\bI_0|$. Now let\footnote{$z$ cannot be equal to $q_\ell$ for any $\ell\in[s]$ because, otherwise, $1=q_\ell\cdot ({1}/{q_\ell})\le (\sum_{q_i\le q_\ell}q_i)\sum_{q_i\ge q_\ell}(1/q_i)\le (z/(100\alpha))(1/(100\alpha z)=1/(10^4\alpha^2)<1$.} $q_1\le q_2\le \cdots \le q_\ell< z<q_{\ell+1}\le \cdots\le q_s$. Define the event $A_0$ as the situation where the outcomes of all the tests $Q_1,Q_2,\ldots,Q_\ell$ in algorithm ${\cal C}$ are $0$. Then

\begin{eqnarray} 
\Pr[\neg A_0|\bd=d]&=&
\Pr_{\bI_0,\bI_1,|\bI_0|=d}[(\exists i\in [\ell])({\cal O}_{\bI_0}(Q_i)=1\vee{\cal O}_{\bI_1}(Q_i)=1)]\nonumber\\&=&\Pr_{\bI_0,\bI_1,|\bI_0|=d}\left[\bigvee_{i=1}^\ell\left( \bI_0\cap Q_i\not=\emptyset \vee \bI_1\cap Q_i\not=\emptyset\right)\right]\nonumber\\&=& \Pr_{\bI_1,|\bI_1|=\alpha d}\left[\bigvee_{i=1}^\ell\left(\bI_1\cap Q_i\not=\emptyset\right)\right]\label{A01}\\
&\le &\sum_{i=1}^\ell \Pr_{\bI_1,|\bI_1|=\alpha d}[\bI_1\cap Q_i\not=\emptyset]\label{A02}\\
&=& \sum_{i=1}^\ell \left(1-\prod_{j=0}^{\alpha d-1}\left(1-\frac{q_i}{n-j}\right)\right)\label{A03}\\
&\le& \sum_{i=1}^\ell \left(1-\left(1-\frac{2q_i}{n}\right)^{\alpha d}\right)\label{A04}\\
&\le& \sum_{i=1}^\ell \frac{2\alpha dq_i}{n}=2\alpha \frac{1}{z}\sum_{i=1}^\ell q_i=\frac{1}{50}.\label{A05}
\end{eqnarray} 

(\ref{A01}) follows from the fact that since $\bI_0\subset\bI_1$ we have $\bI_0\cap Q_i\not=\emptyset$ implies $\bI_1\cap Q_i\not=\emptyset$. (\ref{A02}) follows from the union-bound rule. (\ref{A03}) follows from the fact that $\bI_1$ is a random uniform subset of $[n]$ of size $\alpha d$. Therefore, the probability that 
$\bI_1\cap Q_i\not=\emptyset$ is $1-{n-q_i\choose \alpha d}/{n\choose \alpha d}$. Note here that when $n-q_i<\alpha d$ then $1-{n-q_i\choose \alpha d}/{n\choose \alpha d}=1\le (n-q_i+1)q_i/n\le \alpha d q_i/n<2\alpha d q_i/n$ (the term in (\ref{A05})).  In such a case, we can safely disregard the inequality in step (\ref{A04}). Also, for terms where $2q_i/n>1$ we have $\Pr_{\bI_1}[\bI_1\cap Q_i\not=\emptyset]\le 1< \alpha d (2q_i/n) =2\alpha d q_i/n$ and again for those terms you can disregard the inequality in step (\ref{A04}). (\ref{A04}) follows from the fact that $n-j\ge n-\alpha d\ge n-\alpha 2^{\log(n/\alpha)-1}\ge n/2$. (\ref{A05}) follows from the fact that $(1-x)^y\ge 1-yx$ for $x\in [0,1]$ and $y\ge 1$, then from  (\ref{cond1}) and $z=n/d$.

Now define the event $A_1$ as the situation where the outcomes of all the tests $Q_{\ell+1},Q_{\ell+2},\ldots,Q_s$ in algorithm ${\cal C}$ is $1$. Then
\begin{eqnarray}  \Pr[\neg A_1|\bd=d]&=&\Pr_{\bI_0,\bI_1,|\bI_0|=d}[(\exists i\in [\ell])({\cal O}_{\bI_0}(Q_i)=0\vee{\cal O}_{\bI_1}(Q_i)=0)]\nonumber\\ &=&\Pr_{\bI_0,\bI_1,|\bI_0|=d}\left[\bigvee_{i=\ell+1}^s\left( \bI_0\cap Q_i=\emptyset \vee \bI_1\cap Q_i=\emptyset\right)\right]\nonumber\\&=& \Pr_{\bI_0,|\bI_0|=d}\left[\bigvee_{i=\ell+1}^s\left(\bI_0\cap Q_i=\emptyset\right)\right]\label{A16}\\
&\le &\sum_{i=\ell+1}^s \Pr_{\bI_0,|\bI_0|=d}[\bI_0\cap Q_i=\emptyset]\nonumber\\
&=& \sum_{i=\ell+1}^s \left(\prod_{j=0}^{d-1}\left(1-\frac{q_i}{n-j}\right)\right)\nonumber\\
&\le& \sum_{i=\ell+1}^s \left(1-\frac{q_i}{n}\right)^{d}\nonumber\\
&\le &\sum_{i=\ell+1}^s {\frac{n}{dq_i}}=z\sum_{i=\ell+1}^s {\frac{1}{q_i}}\le \frac{1}{100}.\label{A17}
\end{eqnarray}

(\ref{A16}) follows from the fact that $\bI_1\cap Q_i=\emptyset$ implies that $\bI_0\cap Q_i=\emptyset$. (\ref{A17}) follows from the fact that $(1-x)^d\le 1/(dx)$ for any $0<x\le 1$ and $d>0$ combined with (\ref{cond1}) and $\alpha\ge 2$.

Therefore, when considering $\{\bI_0,\bI_1\}$ drawn according to $D$, with probability at least $97/100$ (since $99/100-1/50-1/100=97/100$), algorithm ${\cal C}$ gets the same outcomes for both $\bI_0$ and $\bI_1$. Consequently, the success probability in this case is $1/2$ (essentially guessing). As a result, the overall success probability of ${\cal C}$ cannot be more than $3/100+(1/2)(97/100)=103/200$ which is less than $3/4$. This leads to a contradiction.\qed
\end{proof}
\color{black}

\bibliography{LogNLB}
\bibliographystyle{plain}

\section*{Appendix}

\noindent
{\bf Lemma \ref{Basic1}.} {\em Let ${\cal A}$ be an algorithm that makes $T$ tests and, with probability at least $2/3$, $\alpha$-estimates the number of defective items. Then there is an algorithm ${\cal A}'$ that makes $O(T\log(1/\delta))$ tests and, with probability at least $1-\delta$, $\alpha$-estimates the number of defective items.}
\begin{proof}
    The algorithm ${\cal A}'$ runs ${\cal A}$ $m=O(\log(1/\delta))$ times ($m$ is odd) and takes the median of the values it outputs. The probability that the median is not in the interval $[|I|,\alpha|I|]$ is the probability that ${\cal A}$ fails at least $\lceil m/2\rceil$ times. By Chernoff's bound, the result follows.\qed
\end{proof}

\begin{lemma}\label{NumT} Let $a,b>0$.
    The number of power of $2$ that are in the interval $[a,b]$ is at least
    $$\left\lfloor\log\max\left(1,\frac{b}{a}\right)\right\rfloor.$$
\end{lemma}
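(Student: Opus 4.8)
The plan is to reduce the count to a short explicit construction, invoking the floor only at the very last step. First I would dispose of the degenerate case $b<a$: here the convention gives $[a,b]=\emptyset$, while $\max(1,b/a)=1$ and $\lfloor\log 1\rfloor=0$, so the asserted bound $0\ge 0$ holds trivially. Thus it suffices to treat $b\ge a$, where $\max(1,b/a)=b/a$ and the target quantity is $m:=\lfloor\log(b/a)\rfloor$. I read ``power of $2$'' as $2^{j}$ with $j\in\mathbb{Z}$ (negative exponents allowed), which is what makes the statement correct for arbitrary $a,b>0$; in the application of Lemma~\ref{Main} the relevant exponents are positive anyway.

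Assume $b\ge a$ and set $m:=\lfloor\log(b/a)\rfloor$. If $m\le 0$ there is nothing to prove, since the count is a nonnegative integer, so I may assume $m\ge 1$. The idea is to start from the smallest power of $2$ that is at least $a$ and walk up $m$ steps. Concretely, let $k:=\lceil\log a\rceil$, so that $2^{k}\ge a$ while $2^{k-1}<a$ (the latter by minimality of $k$, since $k-1<\log a$), and consider the $m$ numbers $2^{k},2^{k+1},\ldots,2^{k+m-1}$.

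It then remains to check that all $m$ of these distinct powers of $2$ lie in $[a,b]$. The lower endpoint is immediate, as each is at least $2^{k}\ge a$. For the upper endpoint, the largest one satisfies $2^{k+m-1}=2^{m-1}\cdot 2^{k}<2^{m-1}\cdot 2a=2^{m}a$, and this is precisely where the floor enters: from $m\le\log(b/a)$ we get $2^{m}\le b/a$, hence $2^{m}a\le b$ and therefore $2^{k+m-1}<b$. This single inequality is the only genuine subtlety, because it must correctly combine the \emph{strict} bound $2^{k-1}<a$ with the \emph{non-strict} bound $2^{m}\le b/a$; matching these up is what guarantees the largest term stays inside the closed interval. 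Having exhibited $m$ distinct powers of $2$ in $[a,b]$, the count is at least $m=\lfloor\log(b/a)\rfloor=\lfloor\log\max(1,b/a)\rfloor$, which is the claim. (An alternative route is to count integers directly: the number of powers of $2$ in $[a,b]$ equals the number of integers in $[\log a,\log b]$, which for a nonempty interval is $\lfloor\log b\rfloor-\lceil\log a\rceil+1>\log(b/a)-1$, and being an integer exceeding $\log(b/a)-1$ it is at least $\lfloor\log(b/a)\rfloor$; I prefer the constructive argument above as it sidesteps the ceiling bookkeeping.)
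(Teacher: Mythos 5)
Your proof is correct and follows essentially the same route as the paper's: both arguments pivot on the smallest power of $2$ that is at least $a$ (your $2^{\lceil\log a\rceil}$ is the paper's $2^{i+1}$), with the paper computing the exact count $j$ from the bracketing $2^{i+j}\le b<2^{i+j+1}$ and showing $j>\log(b/a)-1$, while you exhibit the $m=\lfloor\log(b/a)\rfloor$ powers $2^{k},\ldots,2^{k+m-1}$ directly --- the same arithmetic read constructively (and your parenthetical alternative is literally the paper's argument). Your observation that negative exponents must be allowed for the statement to hold for all $a,b>0$ is a point the paper leaves implicit but which is harmless in the application, exactly as you note.
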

\begin{proof}
If $b<a$ then $[a,b]=\emptyset$ and the number is $0$.

If $b\ge a$ then let $i$ and $j$ be such that $2^i<a\le 2^{i+1}$ and $2^{i+j+1}>b\ge 2^{i+j}$. Then the power of $2$ that are in $[a,b]$ are $\{2^{i+1},2^{i+2},\ldots,2^{i+j}\}$ and their number is $j$. Then
$$j=\log \frac{2^{i+j}}{2^i}> \log \frac{b/2}{a}=\log\frac{b}{a}-1.$$
This implies $j\ge \lfloor \log(b/a)\rfloor$.\qed
\end{proof}

\end{document}